\newenvironment{cl}{Claim}
\definecolor{lightcyan}{rgb}{0.88,1,1}
\definecolor{antiquewhite}{rgb}{0.98, 0.92, 0.84}
\newcounter{casecounter}
\newcounter{subcasecounter}
\newcounter{subsubcasecounter}
\newcommand{\ccase}[2][]{%
	\stepcounter{casecounter}%
	\setcounter{subcasecounter}{0}%
	\protected@write \@auxout {}{\string \newlabel {#2}{{#1\thecasecounter}{\thepage}{#1\thecasecounter}{#2}{}} }%
	\hypertarget{#2}{\noindent\textbf{Case #1\thecasecounter.}}
}
\newcommand{\subcase}[2][]{%
	\stepcounter{subcasecounter}%
	\setcounter{subsubcasecounter}{0}%
	\protected@write \@auxout {}{\string \newlabel {#2}{{#1\thecasecounter.\thesubcasecounter}{\thepage}{#1\thecasecounter.\thesubcasecounter}{#2}{}} }%
	\hypertarget{#2}{\noindent\textbf{Case #1\thecasecounter.\thesubcasecounter.}}
}
\newcommand{\subsubcase}[2][]{%
	\stepcounter{subsubcasecounter}%
	\protected@write \@auxout {}{\string \newlabel {#2}{{#1\thecasecounter.\thesubcasecounter.\thesubsubcasecounter}{\thepage}{#1\thecasecounter.\thesubcasecounter.\thesubsubcasecounter}{#2}{}} }%
	\hypertarget{#2}{\noindent\textbf{Case #1\thecasecounter.\thesubcasecounter.\thesubsubcasecounter.}}
}
\begin{document}
\title{Computing Weak Dominance Drawings with Minimum Number of Fips}
\author{Giacomo Ortali\inst{1}.
	Ioannis G. Tollis\inst{2}
}

\date{}

\institute{
	Universit\`a degli Studi di Perugia, Italy\\
	\email {giacomo.ortali@studenti.unipg.it}
	\and
	Computer Science Department, University of Crete, Heraklion, Crete, Greece \email{tollis@csd.uoc.gr}
}

\maketitle

%
\begin{abstract}
A weak dominance drawing $\Gamma$ of a DAG $G=(V,E)$,  is a $d$-dimensional drawing such that there is a directed path from a vertex $u$ to a vertex $v$ in $G$ if $D(u) <D(v)$ for every dimension $D$ of $\Gamma$. We have a \emph{falsely implied path (fip)} when $D(u) < D(v)$ for every dimension $D$ of~$\Gamma$, but there is no path from $u$ to $v$. 
Minimizing the number of fips is an important theoretical and practical problem, which is NP-hard. We show that it is an FPT~problem for parameter $k$, where $k$ is the maximum degree of a vertex of the \emph{modular~decomposition~tree} of~$G$. Namely, for any constant $d$, we present an $O(nm+ndk^2(k!)^d)$ time algorithm to compute a weak $d$-dimensional dominance drawing $\Gamma$ of a DAG $G$ having the minimum number of fips. An interesting implication of this result is that we can decide if a DAG has dominance dimension~$3$ (a well-known NP-complete problem) in time $O(nm+nk^2(k!)^3)$.

\keywords Dominance drawings \and Weak dominance drawings \and Modular decomposition \and Fixed parameter tractable 
\end{abstract}
\noindent
\section{Introduction}
A \emph{directed acyclic graph} (DAG) $G=(V,E)$ is a directed graph with no directed cycles. 
For any dimension $D$ of a drawing $\Gamma$ of $G$, we denote by $D(v)$ the coordinate of vertex $v\in V$ in dimension $D$. A $d$-dimensional dominance drawing $\Gamma$ of $G$ is a $d$-dimensional drawing of $G$ where, given any pair of vertices $u,v\in V$, $D(u)< D(v)$ for every dimension $D$ of $\Gamma$ if and only if there exists a directed path connecting $u$ to $v$ in~$G$. We only consider directed paths. Hence, from now on we omit the word ``directed''. 
The efficient computation of dominance drawings of DAGs has many applications, including computational geometry~\cite{DBLP:conf/cccg/ElGindyHLMRW93}, graph drawing~\cite{st-planar},  databases~\cite{DBLP:conf/edbt/VelosoCJZ14}, etc. The \emph{dominance dimension} of $G$ is the minimum $d$ such that there exists a $d$-dimensional dominance drawing~of~$G$.

A \emph{partially order set (poset)} is a mathematical formalization of the concept of ordering. Any poset $P$ can be viewed as a transitive DAG $G^*$, i.e., as a DAG that contains its transitive closure graph. 
The dimension of $P$ is equivalent to the dominance dimension of any DAG $G$ whose transitive closure graph is $G^*$. 
The results obtained for DAGs and their dominance dimension transfer directly to posets and their dimension and vice versa (see~\cite{Yannakis} for a formal definition of poset and dimension of a poset). 
%
Due to this connection, we may assume that every dimension of our drawings is a topological order of the vertices of the graph. The literature concerning DAGs and their dominance dimension is vast. We report here some previous results.

Testing if a DAG has dominance dimension 2 requires linear time~\cite{Partiallyorderedsets,DBLP:conf/soda/McConnellS97}, while it is NP-complete to decide if the dominance dimension is greater than or equal to $3$~\cite{Yannakis}. A linear-time algorithm that constructs 2-dimensional dominance drawings of upward planar graphs is described in~\cite{DBLP:journals/dcg/BattistaTT92} (see also~\cite{st-planar}). The dominance dimension of a DAG with $n$ vertices is bounded~by~$\frac{n}{2}$~\cite{BOGART197321,N/2}.

Most DAGs have dominance dimension higher that two and, in general, computing dominance drawings with a bounded number of dimension is difficult. For this reason, a relaxed version of the concept of dominance drawings, the \emph{weak dominance drawings},  was introduced in~\cite{DBLP:conf/gd/KornaropoulosT12a}. In weak dominance, the ``if and only if'' of the definition of dominance becomes an ``if''. More formally, in a  weak dominance drawing $\Gamma$ of a DAG $G=(V,E)$, for any two vertices $u,v\in V$ there is a path from $u$ to $v$ in $G$ if $D(u) <D(v)$ for any dimension $D$ of $\Gamma$. We have a \emph{falsely implied path (fip)} when $D(u) < D(v)$ for every dimension $D$ of $\Gamma$, but $u$ and $v$ are incomparable, i.e., there is no path from $u$ to $v$ or from $v$ to $u$ in $G$.

For any DAG $G$ and any value $d$, $G$ admits a $d$-dimensional weak dominance drawing.  Given a $d$-dimensional weak dominance drawing, it is possible to test in $O(d)$ time if two vertices $u$ and $v$ of $G$ are incomparable. Otherwise, other kind of computations are required to check the existence of such a path, for example, a Breadth First Search (BFS), which would take $O(n + m)$ time per search.  Thus, minimizing the number of fips also minimizes the number of BFS potentially required and, consequently, it is also an important practical problem. 
However, the problem of minimizing the number of fips is NP-hard~\cite{DBLP:journals/corr/abs-1108-1439,DBLP:conf/gd/KornaropoulosT12a}.

Recently, the concept of weak dominance drawing was adopted in order to construct compact representations of  the reachability information of large graphs that are produced by large datasets in the database community, also considering high dimensional dominance drawings~\cite{DBLP:journals/www/LiHZ17,DBLP:conf/edbt/VelosoCJZ14}.
The number of fips (or false positives in their terminology) plays a crucial role.



\smallskip
\noindent \textbf{Our contribution:} We show that, for any constant $d$, computing a $d$-dimensional weak dominance drawing of a DAG $G$  with the minimum number of fips  is a fixed-parameter tractable (FPT) problem for parameter $k$, where $k$ is the maximum degree of a vertex of a \emph{modular~decomposition~tree}~of~$G$. Namely, for any constant $d$, we present an $O(nm+ndk^2(k!)^d)$ time algorithm to compute a $d$-dimensional weak dominance drawing $\Gamma$ of a DAG $G$ having the minimum number of fips. This result has interesting implications, for example, we can decide if a DAG has dominance dimension $3$ (a well-known NP-complete problem) in time $O(nm+nk^2(k!)^3)$. Similarly, for any constant number of dimensions.

\section{Preliminaries}
\label{se:preliminaries}
In this section we introduce two concepts that we use in the rest of the paper. In order to prove Theorem~\ref{th:fpt}, which is our main contribution, our strategy is to iteratively consider graphs obtained from the input DAG $G$ by merging some of its vertices into ``super-vertices''. This way of merging and the parameter $k$, that we use in our fixed-parameter algorithm, are introduced in Section~\ref{subse:moduladecompositiontree}, where we discuss the concept of \emph{modular decomposition tree}. In our algorithm every super-vertex has a \emph{cost}, that is equal to the number of vertices of $G$ merged to it. In Section~\ref{subse:costminimum} we define the concept of \emph{cost-minimum weak dominance drawing}. 

\subsection{Modular Decomposition Tree}
\label{subse:moduladecompositiontree}
Let $G=(V,E)$ be a DAG. An \emph{edge-based module} $M$ of $G$ is a subset of $V$ so that every vertex of $M$ is adjacent to the same set of vertices of $V\setminus M$. More formally, either $|M|=1$ or, for any two vertices $v_1,v_2\in M$ and any vertex $u\in V\setminus M$:  $(v_1,u)\in E$ if and only if $(v_2,u)\in E$; $(u,v_1)\in E$ if and only if $(u,v_2)\in E$. In the literature, the edge-based modules are simply called ``modules''~\cite{DBLP:journals/dm/McConnellS99}.

The \emph{edge-based congruence partition} $C_P$ of $V$ is a partition of $V$ into edge-based modules. The \emph{edge-based quotient graph} $G/C_P$ is the graph obtained from $G$ by merging into a ``super-vertex'' the vertices of each edge-based module of $C_P$. The \emph{edge-based modular decomposition} of $G$ is a tree $T$ describing a decomposition of $G$ based into its edge-based modules. The root of $T$ is the trivial edge-based  module $V$ and any leaf of $T$ is a trivial edge-based module~$\{v\}$, where $v\in V$.
For further details about the concepts defined so far~see~\cite{DBLP:journals/dm/McConnellS99}.

\begin{example}	\label{ex:edge-based-moddecomp} Fig.~\ref{fi:edge-modular-decompsition}(a) depicts a DAG $G$ and its edge-based modules. The non-trivial modules are $M_1=\{2,3\}$ and $M_2=\{6,11\}$. Fig.~\ref{fi:modular-decompsition}(b) depicts the edge-based modular decomposition tree of $G$.
\end{example}

In this paper we consider \emph{path-based modules}.  A path-based module $M$ of $G$ is a non-empty subset of $V$ so that either $|M|=1$ or, for any two vertices $v_1,v_2\in M$ and any vertex $u\in V\setminus M$: There is a path connecting $v_1$ to $u$ if and only if there is a path connecting $v_2$ to $u$; there is a path connecting $u$ to $v_1$ if and only if there is a path connecting $u$ to $v_2$.  The path-based modules are used in~\cite{DBLP:journals/dase/AnirbanWI19} as a generalization of the concept of edge-based modules.

Notice that a path-based module of $G$ is an edge-based module for the transitive closure graph $G^*$ of $G$. Hence, any result for the edge-based modules can be transferred to the path-based module. Therefore, we have the concepts of \emph{path-based congruence partition}, \emph{path-based quotient graph}, and \emph{path-based modular decomposition tree}. Since the edge-based modular decomposition tree can be computed in $O(m)$ time~\cite{DBLP:journals/dm/McConnellS99}, the path-based modular decomposition tree can be computed in $O(nm)$ time.

\begin{figure}[h]
	\centering
	\subfigure[]{\includegraphics[width=0.49\columnwidth,page=3]{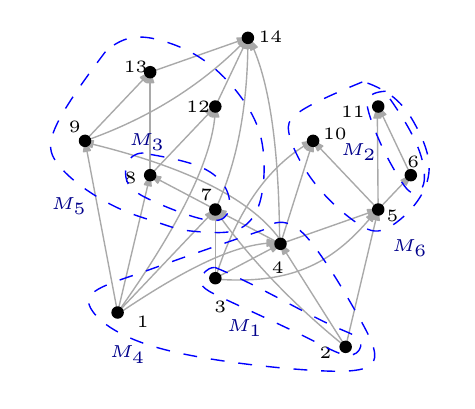}}
	\hfil
	\subfigure[]{\includegraphics[width=0.49\columnwidth,page=4]{modular-decomposition}}
	\hfil
	\caption{A DAG $G$ and a the edge-based modular decomposition tree of $G$.}\label{fi:edge-modular-decompsition}
\end{figure}

\begin{figure}[h]
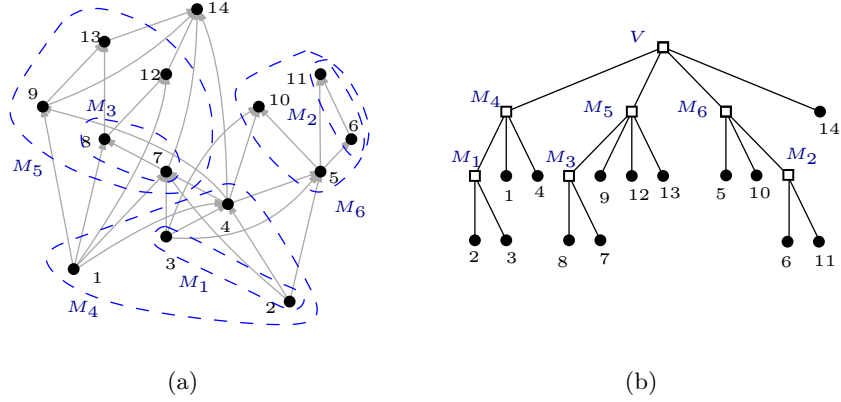

	\centering
	\subfigure[]{\includegraphics[width=0.49\columnwidth,page=1]{modular-decomposition}}
	\hfil
	\subfigure[]{\includegraphics[width=0.49\columnwidth,page=2]{modular-decomposition}}
	\hfil
	\caption{A DAG $G$ and the path-based modular decomposition tree of $G$.}\label{fi:modular-decompsition}
\end{figure}

\begin{example}
	\label{ex:path-based-moddecomp} Fig.~\ref{fi:modular-decompsition}(a) depicts a DAG $G$ and its path-based modules. Notice that $M_1$ and $M_2$ are also edge-based modules. Fig.~\ref{fi:modular-decompsition}(b) depicts the path-based modular decomposition tree of $G$. 
\end{example}

\begin{example}
	\label{ex:path-based-moddecomp-2}
	 Fig.~\ref{fi:modules}(a) depicts a DAG $G$ and a path-based congruence partition $C_P=\{M_1,M_2,M_3,$ $M_4,M_5\}$ of $G$. Fig.~\ref{fi:modules}(b) depicts the path-based quotient graph $G/C_P$, where every vertex $v_i$ is the super-vertex associated to the path-based module $M_i\in C/P$~($i\in [1,5]$).
\end{example}

The parameter that we consider for our fixed-parameter algorithm is the maximum degree $k$ of a vertex of the edge/path-based modular decomposition tree.  Since the definition of edge-based module is more restrictive than the definition of path-based module (i.e., an edge-based module is always a path-based module, but not vice versa), the value of $k$ for the path-based modular decomposition tree is less than or equal to the value of $k$ for the edge-based modular decomposition tree. For this reason, we consider only path-based modules.

\begin{example}
\label{ex:kvalues}
Consider the graph in Fig.~\ref{fi:edge-modular-decompsition}(a) (also depicted in Fig.~\ref{fi:modular-decompsition}(a)). If we consider its edge-based modular decomposition tree, depicted in Fig.~\ref{fi:edge-modular-decompsition}(b), we have $k=12$. If we consider its path-based modular decomposition tree, depicted in Fig.~\ref{fi:modular-decompsition}(b), we have $k=4$. 
\end{example}

Since here we only consider path-based modules, for simplicity we omit the word ``path-based''. Given the modular decomposition tree $T$ of $G$ and a module $M$, if $|M|>1$ we associate to $G_M$ the congruence partition induced by the children of the vertex associated to $M$ in $T$. Also, we denote by $G_M$ the subgraph of $G$ induced by the vertices in~$M$.  

\begin{example} Consider Fig.~\ref{fi:modular-decompsition}.  The root of the tree is the trivial module $V$ and $G_V=G$. We associate to $G$ the congruence parition $\{M_4,M_5,M_6,\{14\}\}$. We associate to $G_{M_4}$ the congruence partition $\{M_1,\{1\},\{4\}\}$.
\end{example}

\subsection{Cost-Minimum Weak Dominance Drawings}
\label{subse:costminimum}
Let $H$ be a DAG such that every vertex $v$ is assigned a \emph{cost} $c(v)$. Let $\Gamma$ be a weak dominance drawing of $H$.  The cost of a fip $(u,v)$ in $\Gamma$ is $c(u,v)=c(u)\cdot c(v)$. The \emph{cost} of $\Gamma$ is the sum of the costs of its fips.  Let $G$ be a DAG, $C_P=\{M_1,...M_h\}$ be a congruence partition of $G$, and $v_i$ be the super-vertex representing $M_i\in C_P$ in $G/C_P$ ($i\in [1,h]$). We assign the cost to the vertices of $G$ and $G/C_P$ as follows:
\begin{itemize}
	\item For any $v\in G$, $c(v)=1$.
	\item For any $v_i\in G/C_P$, $c(v_i)=|M_i|$ ($i\in [1,h]$).
\end{itemize}
With this cost assignment, the cost of a weak dominance drawing of $G$ is equal to its number of fips.

\begin{example}  
	\label{ex:cost} 
	Fig.~\ref{fi:modules}(a) depicts a DAG $G$ and a congruence partition $C_P=\{M_1,M_2,$ $M_3,$ $M_4,M_5\}$ of $G$. Fig.~\ref{fi:modules}(b) depicts the quotient graph $G/C_P$. Fig.~\ref{fi:modules}(c) depicts three 2-dimensional weak dominance drawings~of~$G/C_P$. 
	
	\noindent$\bullet$  Drawing $\Gamma_1$ has the following six fips: \emph{\#1}~$(v_1,v_2)$; \emph{\#2}~$(v_1,v_3)$; \emph{\#3}~$(v_2,v_5)$; \emph{\#4}~$(v_3,v_4)$; \emph{\#5}~$(v_4,v_5)$; \emph{\#6}~$(v_6,v_5)$. The cost of $\Gamma_1$ is  $c(v_1,v_2)+c(v_1,v_3)+c(v_2,v_5)+c(v_3,v_4)+c(v_4,v_5)+c(v_6,v_5)=6+12+1+18+9+1=47$. 
	
	\noindent$\bullet$ Drawing $\Gamma_2$ contains only the fip $(v_3,v_4)$ and its cost is $c(v_3,v_4)=18$. 
	
	\noindent$\bullet$ Drawing $\Gamma_3$ contains only the fip $(v_2,v_5)$ and its cost is $c(v_2,v_5)=1$. 
	
	\noindent
	Since $G/C_P$ is the crown graph, the cost of a weak dominance drawing of $G/C_P$ it is at least $1$. See, for example,~\cite{DBLP:conf/gd/KornaropoulosT12a} . Hence, $\Gamma_3$ is a cost-minimum 2-dimensional weak dominance drawing of~$G/C_P$.
\end{example}

\begin{figure}[h]
	\centering
	\subfigure[]{\includegraphics[width=0.57\columnwidth,page=1]{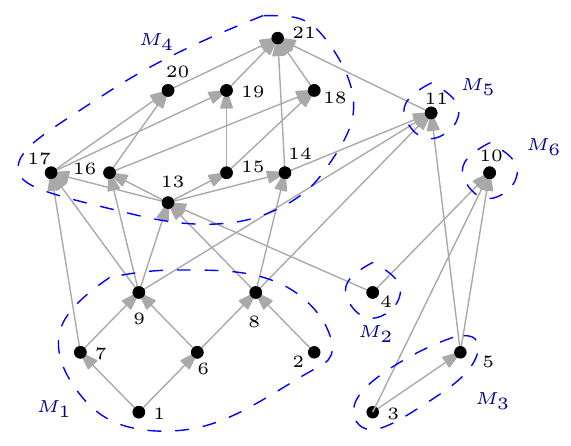}}
	\hfill
	\subfigure[]{\includegraphics[width=0.42\columnwidth,page=2]{min-fip-example}}
	\hfill
	\subfigure[]{	\includegraphics[width=0.7\columnwidth,page=8]{min-fip-example}}
	\hfill
	\caption{(a) A DAG $G$ and a congruence partition $C_P=\{M_1,...,M_6\}$. (b) The correspondent quotient graph $G/C_P$ and the costs of its vertices. (c) Three different 2-dimensional weak dominance drawings $\Gamma_1$, $\Gamma_2$, and $\Gamma_3$ of $G/C_P$. Drawing $\Gamma_3$ is cost-minimum.}\label{fi:modules}
\end{figure}

\section{Minimizing the Number of Fips}
\label{se:th1}
In this section we present the main contribution of this paper, stated in Theorem~\ref{th:fpt}. Before presenting and proving the theorem, we show two intermediate results: Lemma~\ref{le:compaction} and Lemma~\ref{le:cost_CP}.

A module $M$ of a DAG $G$ is \emph{compact} in a dimension $D$ of a weak dominance drawing $\Gamma$ of $G$ if the coordinates of the vertices of $M$ in  $D$ are consecutive. Also, $M$ is \emph{compact} in $\Gamma$ if it is compact in every dimension of $\Gamma$. A congruence partition $C_P$ of $G$ is compact in $\Gamma$ if every module $M\in C_P$ is compact in $\Gamma$. 

\begin{example}
Fig.~\ref{fi:compaction}(a-c) depicts three 2-dimensional weak dominance drawings of the same graph. Module $M=\{5,7,10,12,14\}$ is: not compact in any dimension in~(a); compact in Dimension $Y$ in~(b), but not in Dimension $X$; compact in~(c).
\end{example}

\begin{example} Fig.~\ref{fi:cost-minimumdrawings}(e) depicts a 2-dimensional weak dominance drawing where the congruence partition $C_P=\{M_1,...,M_6\}$ is compact. 
\end{example}

Recall that we assume $c(v)=1$ for every vertex $v$ of $G$ and that a cost-minimum weak dominance drawing of $G$ is a weak dominance drawing of $G$ with the minimum number of fips. The following lemma, which is proved in Section~\ref{se:compaction_proof}, is the first ingredient that we need before proving Theorem~\ref{th:fpt}.

\begin{restatable}{lemma}{compact}
	\label{le:compaction}
	Let $G$ be a DAG, $C_P$ be a congruence partition of $G$ and $d$ be a constant. There exists a cost-minimum $d$-dimensional weak dominance drawing of $G$ where $C_P$ is compact.
\end{restatable}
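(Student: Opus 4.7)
The plan is to take any cost-minimum $d$-dimensional weak dominance drawing $\Gamma$ of $G$ and modify it, one module at a time, into a compact drawing without increasing the number of fips. I would process the modules of $C_P$ in an order consistent with the modular decomposition tree, and for each module $M$ I would rearrange all $d$ dimensions jointly rather than one dimension at a time.

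The structural fact I would lean on throughout is that any non-$M$ vertex $s$ whose $D$-coordinate strictly separates two $M$-vertices must be incomparable with every vertex of $M$ in $G$: the topological property of $D$ forbids both $s\to v_1$ (where $v_1$ is the preceding $M$-vertex in $D$) and $v_2\to s$ (where $v_2$ is the succeeding one), and the module property of $M$ extends this to all of $M$. Thus every such $s$ lives in $I_M$, and $M$-vertices can be swapped past such $s$-vertices without breaking topological order. Concretely, for a fixed $M$ I would build $\Gamma'$ as follows: in each dimension $D$, pick a target block position $q_D$ inside its topologically valid range (determined by the $D|_{V\setminus M}$-ranks of $M$'s ancestors and descendants), then place the $M$-vertices at consecutive positions $[q_D,q_D+|M|-1]$ preserving their $\Gamma$-induced $D$-order, and let non-$M$ vertices fill the remaining positions preserving their $\Gamma$-induced $D$-order. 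The only pairs whose relative $D$-order can change are pairs $(v,u)$ with $v\in M$ and $u\in I_M$ lying in the original $D$-range of $M$; every other fip is preserved.

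The main step, and the main obstacle, is to choose $\{q_D\}_D$ so that the number of fips does not increase. In $\Gamma'$, a fixed $u\in I_M$ contributes either $0$ or $|M|$ fips with vertices of $M$, according to whether its side pattern (which dimensions have $u$ to the left of the block versus to the right) is mixed or constant — this is immediate from compactness together with the incomparability of $u$ with every vertex of $M$. I would split the analysis: if $u$'s $\Gamma$-positions force a constant side pattern in every dimension, then the same $|M|$ fips were already present in $\Gamma$, so the contribution is unchanged; otherwise $u$ has flexibility in at least one dimension, and by choosing $\{q_D\}$ jointly we can realize a mixed pattern for $u$, giving contribution $0$, which is at most the contribution of $u$ in $\Gamma$. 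The hardest part is arranging a single choice of $\{q_D\}$ that simultaneously yields mixed patterns for all flexible $u\in I_M$; I would address this by a dimension-wise greedy argument exploiting that $d\geq 2$ and, when a genuine conflict seems to arise, by an exchange argument showing that such a conflict would produce a strictly cheaper modification of $\Gamma$, contradicting its cost-minimality. Iterating this compactification across all modules of $C_P$ yields the desired compact cost-minimum drawing.
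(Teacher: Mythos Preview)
Your structural observations are sound: separator vertices are incomparable with all of $M$, inner-fips are unaffected by any rearrangement that preserves the relative $D$-orders within $M$ and within $V\setminus M$, and after compacting $M$ each $u\in I_M$ contributes either $0$ or $|M|$ outer-fips according to whether its side pattern is mixed or constant. The gap is precisely where you flag it as ``the hardest part'': you assert that a single choice of $\{q_D\}_D$ can realise a mixed pattern for \emph{every} flexible $u$, and fall back on an unspecified exchange argument when it cannot. Neither piece is actually supplied. The simultaneous-mixed goal is in general unattainable --- two flexible vertices that are forced to opposite sides in all other dimensions but interleave with $M$ in the same single dimension will demand opposite placements of the block there --- so the whole weight of the argument rests on an exchange step you have not written down. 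Even if such an exchange argument can be made to work, note that ``all flexible $u$ get $0$ fips'' is stronger than what the lemma needs; you only need the \emph{total} not to increase.

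The paper sidesteps this difficulty with a much lighter idea. For each $v\in M$ let $out_v$ be its number of outer-fips in $\Gamma$, and pick a pivot $p\in M$ with $out_p$ minimum. Now compact $M$ in every dimension \emph{toward $p$}: slide $M$-vertices past separator vertices so that $M$ becomes a contiguous block while the position of $p$ relative to every vertex of $V\setminus M$ is unchanged. After this, every $v\in M$ has exactly the same side pattern as $p$ with respect to every $u\notin M$, hence exactly $out_p$ outer-fips; the new total is $|M|\cdot out_p\le\sum_{v\in M} out_v$, which is the old total. Inner-fips are untouched because relative orders inside $M$ and inside $V\setminus M$ are preserved. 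This pivot argument replaces your global optimisation over block positions by a single averaging inequality, and it also makes Property~(3) --- that a previously compact module $M'\subseteq V\setminus M$ stays compact --- immediate, since separator vertices are only ever swapped past $M$-vertices, and if one vertex of a compact $M'$ lies in the separator then all of $M'$ does. Your proposal lacks this pivot idea; supplying it (or a genuine proof of your exchange step) is what is missing.
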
 

Lemma~\ref{le:compaction} implies that it is possible to compute a cost-minimum $d$-dimensional weak dominance drawing of $G$ where $C_P$ is compact without loss of generality.
 
Given a DAG $G$ and a congruence partition $C_P$ of $G$, let $opt(G/C_P)$ be the cost of a cost-minimum weak dominance drawing of $G/C_P$. 
Given a weak dominance drawing $\Gamma$ of $G$ and $C_P$, we say that a fip $(u,v)$ of $\Gamma$ is an \emph{inner-fip} if $u,v\in M$, where $M$ is a module of $C_P$. Otherwise, $(u,v)$ is an \emph{outer-fip}. 

\begin{example} As we showed in Example~\ref{ex:cost}, $\Gamma_3$ in Fig.~\ref{fi:modules}(c) is a cost-minimum 2-dimensional weak dominance drawing of the graph $G/C_P$ in Fig.~\ref{fi:modules}(b) and its cost is $1$. Hence, $opt(G/C_P)=1$.
\end{example}

\begin{example}
	Consider the 2-dimensional weak dominance drawing in Fig.~\ref{fi:cost-minimumdrawings}(e) and the congruence partition $C_P=\{M_1...M_6\}$. Fip $(16,19)$ is an inner-fip. Fip $(4,11)$ is an outer-fip. 
\end{example}

\medskip
The next lemma relates the number of outer-fips of a weak dominance drawing of $G$ where $C_P$ is compact to the value~$opt(G/C_P)$.

\begin{lemma}
\label{le:cost_CP}
Let $G$ be a DAG and $C_P$ be a congruence partition of $G$. For any weak dominance drawing $\Gamma$ of $G$ such that $C_P$ is compact in~$\Gamma$ and having $t$ outer-fips, we have $t\ge opt(G/C_P)$.
\end{lemma}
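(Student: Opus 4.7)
The plan is to build a weak dominance drawing $\Gamma'$ of $G/C_P$ from $\Gamma$ by ``contracting'' each compact module to a single point in every dimension, and then show that its cost is at most~$t$. This will immediately give $opt(G/C_P)\le t$, as required.

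More precisely, for every module $M\in C_P$ and every dimension $D$ of $\Gamma$, the coordinates of the vertices of $M$ in $D$ form a consecutive interval (by compactness). I would define a dimension $D'$ of $\Gamma'$ by ordering the super-vertices of $G/C_P$ according to the position of their corresponding intervals in $D$, thus producing a $d$-dimensional drawing $\Gamma'$ of $G/C_P$. The first step is to verify that $\Gamma'$ is a weak dominance drawing: if $D'(v_{M_1})<D'(v_{M_2})$ in every dimension, then the interval of $M_1$ entirely precedes that of $M_2$ in every dimension of $\Gamma$, so $D(u)<D(v)$ for every $u\in M_1$, $v\in M_2$, and every~$D$. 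Since $\Gamma$ is a weak dominance drawing of $G$, there is a path in $G$ from some $u\in M_1$ to some $v\in M_2$, which lifts to a path from $v_{M_1}$ to $v_{M_2}$ in $G/C_P$.

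The second step is the cost bookkeeping. Take any fip $(v_{M_1},v_{M_2})$ of $\Gamma'$; by construction $D(u)<D(v)$ for every $u\in M_1$, $v\in M_2$, and every dimension~$D$. If some $(u,v)$ with $u\in M_1$, $v\in M_2$ were a real edge-path in $G$, then by the module property there would be a path from $v_{M_1}$ to $v_{M_2}$ in $G/C_P$, contradicting the assumption that $(v_{M_1},v_{M_2})$ is a fip. Hence every one of the $|M_1|\cdot|M_2|$ pairs in $M_1\times M_2$ is an outer-fip of $\Gamma$, contributing exactly the cost $c(v_{M_1})\cdot c(v_{M_2})=|M_1|\cdot|M_2|$ of the fip in $\Gamma'$. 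Summing over all fips of $\Gamma'$, and noting that the pair-sets $M_1\times M_2$ associated with distinct fips are disjoint, yields $\mathrm{cost}(\Gamma')\le t$, and therefore $opt(G/C_P)\le \mathrm{cost}(\Gamma')\le t$.

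The main obstacle I anticipate is the verification that $\Gamma'$ really is a weak dominance drawing of $G/C_P$, since this is the only place the module property is used in an essential way; once this is in place, the counting argument is mechanical. A minor care point is that different fips in $\Gamma'$ must correspond to disjoint sets of outer-fips in $\Gamma$, which follows because the pairs of modules involved are distinct.
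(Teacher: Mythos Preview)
Your overall strategy is the same as the paper's: contract each compact module to a point to obtain a drawing $\Gamma'$ of $G/C_P$, and then compare its cost with the number of outer-fips of $\Gamma$. The bookkeeping in your second step is correct and in fact slightly cleaner than the paper's, which proves the stronger equality $\mathrm{cost}(\Gamma')=t$ while you observe that the inequality $\mathrm{cost}(\Gamma')\le t$ already suffices.

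However, your verification that $\Gamma'$ is a weak dominance drawing argues the wrong implication. You start from ``$D'(v_{M_1})<D'(v_{M_2})$ in every dimension'' and try to deduce a path, invoking ``since $\Gamma$ is a weak dominance drawing of $G$, there is a path in $G$ from some $u\in M_1$ to some $v\in M_2$.'' That deduction is invalid: in a \emph{weak} dominance drawing the implication goes only from the existence of a path to coordinate domination, not conversely---the failure of the converse is exactly what a fip is. What you must check is the other direction: if there is a path from $v_{M_1}$ to $v_{M_2}$ in $G/C_P$, then (by the module property) there is a path in $G$ from every $u\in M_1$ to every $v\in M_2$, hence $D(u)<D(v)$ in every dimension of $\Gamma$, hence the interval of $M_1$ precedes that of $M_2$, hence $D'(v_{M_1})<D'(v_{M_2})$. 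With this correction your proof goes through.
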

\begin{proof}
	Let $C_P=\{M_1,...,M_h\}$. Let $\Gamma$ be a weak dominance drawing of $G$ where $C_P$ is compact and with $t$ outer-fips. It is possible to construct a weak dominance drawing $\Gamma'$ of $G/C_P$ by: Contracting every $M_i\in C_P$ ($i\in [1,h]$) to a vertex $v_i$ in $\Gamma$; assigning $c(v_i)=|M_i|$. See Example~\ref{ex:contraction}.
	Let $t'$ be the cost of $\Gamma'$. Notice that $t'\ge opt(G,C_P)$ by definition. We now~prove~$t'=t$. 
	Since $C_P$ is compact in $\Gamma$ and by definition of module, we have that: For every fip $(u,v)$ of $\Gamma$ such that $u\in M_i$ and $v\in M_j$, where $i,j\in [1,h]$ and $i\not =j$,  there is a fip $(u',v')$ for any couple of vertices $u'$ and $v'$ such that $u'\in M_i$ and $v'\in M_j$. Hence, any fip $(u,v)$ of $\Gamma$ implies the existence of $|M_i|\cdot|M_j|$ outer-fips in $\Gamma$. Also, since $\Gamma'$ is obtained by contracting the vertices of every module of $C_P$ and since $C_P$ is compact in $\Gamma$, fip $(u,v)$ implies a fip $(v_i,v_j)$ having a cost $c(v_i)\cdot c(v_j)=|M_i|\cdot|M_j|$ in $\Gamma'$. Hence, $t'\ge t$. By a symmetric argument, every fip $(v_i,v_j)$ in $\Gamma'$ implies the existence of   $|M_i|\cdot |M_j|$ outer-fips in $\Gamma$. Hence, $t'\le t$. Since $t'\ge t$ and $t'\le t$ we have $t'=t$. 
	Since $t'=t$ and $t'\ge opt(G,C_P)$, we have $t\ge opt(G,C_P)$.
\end{proof}

\begin{example}
	\label{ex:contraction} Fig.~\ref{fi:cost-minimumdrawings}(e) depicts a 2-dimensional weak dominance drawing $\Gamma$ of the graph in Fig.~\ref{fi:modules}(a) where the congruence partition $C_P=\{M_1,...,M_6\}$ is compact. By contracting every module $M_i$ to a vertex $v_i$ and by assigning $c(v_i)=|M_i|$ ($i\in [1,6]$) we obtain the weak dominance drawing of $G/C_P$ in Fig.~\ref{fi:cost-minimumdrawings}(d).
\end{example}
We are ready to prove Theorem~\ref{th:fpt}, which is our main contribution.
\begin{theorem}
	\label{th:fpt}
	Let $G=(V,E)$ be a DAG. Let $k$ be the maximum degree of a vertex of the modular decomposition tree $T$ of $G$. For any constant $d$, it is possible to compute a $d$-dimensional weak dominance drawing of $G$ with the minimum number of fips in $O(nm+ndk^2(k!)^d)$ time.
\end{theorem}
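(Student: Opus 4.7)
The plan is to process the modular decomposition tree $T$ bottom-up, computing at each internal node $M$ a cost-minimum $d$-dimensional weak dominance drawing of $G_M$. The key insight, provided by Lemma~\ref{le:compaction} and Lemma~\ref{le:cost_CP}, is that the optimization decomposes across the levels of $T$: the minimum number of fips of a drawing of $G_M$ equals the cost of a cost-minimum weak dominance drawing of the quotient graph $G_M/C_P$ (computed under the module-size cost assignment of Section~\ref{subse:costminimum}) plus the sum of the minimum numbers of fips for each child module $G_{M_i}$.

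First I would compute the modular decomposition tree $T$ in $O(nm)$ time, as stated in Section~\ref{subse:moduladecompositiontree}. I would then iterate over the internal nodes of $T$ in post-order. At each internal node $M$ with children $M_1,\dots,M_h$ (with $h\le k$ by hypothesis on the maximum degree of $T$), the quotient graph $G_M/C_P$ has at most $k$ vertices, and the super-vertex $v_i$ representing $M_i$ has cost $|M_i|$. To find a cost-minimum weak dominance drawing of $G_M/C_P$, I would exhaustively enumerate all $d$-tuples of topological orderings of its vertices (at most $(k!)^d$ such tuples, since each dimension of a weak dominance drawing may be taken to be a topological order), and for each candidate check in $O(d k^2)$ time whether it is a valid weak dominance drawing and compute its cost by summing $c(v_i)\cdot c(v_j)$ over the fips. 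This requires $O(d k^2 (k!)^d)$ time per internal node, hence $O(nd k^2 (k!)^d)$ in total since $T$ has $O(n)$ nodes.

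Having found the optimal quotient drawing at $M$, the drawing of $G_M$ is assembled by substitution: the layout of $G_M/C_P$ determines, by compactness, pairwise disjoint consecutive coordinate intervals in each dimension for the children $M_i$, into which the previously computed optimal drawing of $G_{M_i}$ is embedded and rescaled. That this substitution still yields a weak dominance drawing of $G_M$ follows from the definition of module (so outer-edges between $M_i$ and $M_j$ are uniform) combined with compactness, while Lemma~\ref{le:cost_CP} guarantees that the resulting number of outer-fips is exactly the cost of the chosen optimal drawing of $G_M/C_P$. The inner-fips within each $M_i$ are inherited from the recursively optimal drawing of $G_{M_i}$, so the total fip count is additive. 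Iterating from the leaves of $T$ (trivial one-vertex modules that incur no fip) to the root $V$ produces the desired drawing of $G=G_V$.

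The main obstacle is justifying that this local greedy/recursive scheme yields a \emph{global} optimum rather than merely a local one. This is precisely where Lemma~\ref{le:compaction} is used: it ensures that some global cost-minimum drawing of $G_M$ can be taken with $C_P$ compact. Given compactness, Lemma~\ref{le:cost_CP} forbids any drawing from having fewer than $opt(G_M/C_P)$ outer-fips, and the inner-fips within each $M_i$ are trivially at least the minimum for $G_{M_i}$; hence the optimum of $G_M$ equals the sum the algorithm attains. A straightforward induction on the depth of $T$ then shows global optimality. Summing the two contributions yields the claimed running time $O(nm) + O(nd k^2 (k!)^d) = O(nm + nd k^2 (k!)^d)$.
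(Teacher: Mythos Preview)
Your proposal is correct and follows essentially the same approach as the paper: a bottom-up traversal of the modular decomposition tree, brute-force optimization of each quotient graph in $O(dk^2(k!)^d)$ time, expansion of the quotient drawing by substituting the recursively computed drawings of the child modules, and correctness via Lemma~\ref{le:compaction} (reduction to compact drawings) together with Lemma~\ref{le:cost_CP} (lower bound on outer-fips). One small remark: Lemma~\ref{le:cost_CP} as stated gives only the inequality $t\ge opt(G_M/C_P)$, so the claim that the expanded drawing has \emph{exactly} $opt(G_M/C_P)$ outer-fips requires the bijection between outer-fips of the expansion and weighted fips of the quotient established inside that lemma's proof (the paper likewise says it uses ``arguments similar to the ones of Lemma~\ref{le:cost_CP}'' for this step).
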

\begin{proof}
Recall that, since we assign $c(v)=1$ for every $v\in V$, for any module $M$ of $G$ a cost-minimum weak dominance drawing of $G_M$ has the minimum number of fips. If $M=V$, $G_V=G$. It is possible to compute a cost-minimum $d$-dimensional weak dominance drawing of any graph $H$ having $k$ vertices in $O(dk^2(k!)^d)$ time by using the following brute force algorithm: (a)~Compute all the possible $d$-dimensional weak dominance drawings of $H$ in $O((k!)^d)$ time; (b)~test in $O(dk^2)$ time, for each drawing, its cost; (c)~select the drawing with the minimum cost. In order to compute the cost-minimum $d$-dimensional weak dominance drawing of $G$ drawing we do a bottom-up traversal of $T$.

	\medskip\noindent \texttt{Base Step:} Let $M$ be a module of $G$ such that the children of the corresponding vertex in $T$ are leaves of $T$. Every module of the congruence partition $C_P$ associated to $G_{M}$ is a trivial module with cardinality 1. Hence, $G_{M}$ is equal to the quotient graph $G_{M}/C_P$ and it has less than $k$ vertices. It is possible to compute a cost-minimum weak dominance drawing of $G_{M}$ in $O(dk^2(k!)^d)$ time.

	\medskip\noindent \texttt{Recursive Step:} Let $M$ be a module of $G$ such that  the corresponding vertex in $T$ has $k'\le k$ children that are not all leaves of $T$ (i.e., one of them is an internal vertex of $T$). In order to simplify the notation, suppose $k'=k$ and $M=V$ without loss of generality. I.e., $G_{M}=G$. Let $C_P=\{M_1,...,M_h\}$ be the congruence partition that we associate to $G$ given~$T$ (see Example~\ref{ex:path-based-moddecomp-2}).
	By inductive hypothesis, the cost-minimum weak dominance drawing $\Gamma_{M_i}$ of $G_{M_i}$ is given for every $M_i\in C_P$ ($i\in [1,h]$). We now compute a cost-minimum $d$-dimensional weak dominance drawing $\Gamma$ of $G$. Since $G/C_P$ has $k$ vertices, we can compute a cost-minimum weak dominance drawing $\Gamma'$ of $G/C_P$ in  $O(dk^2(k!)^d)$ time. Recall that, for any $M_i\in C_P$ ($i\in [1,h]$), $c(v_i)=|M_i|$, where $v_i$ is the vertex of $G/C_P$ associated~to~$M_i$.
	
	\begin{example} 
		Consider the DAG $G$ in Fig.~\ref{fi:modules}(a) and the case $d=2$. For any $i\in \{1,2,3,5,6\}$, $G_{M_i}$ is planar and $\Gamma_{M_i}$ is a dominance drawing. Drawing $\Gamma_{M_1}$ is in Fig.~\ref{fi:cost-minimumdrawings}(a), while $\Gamma_{M_2}$, $\Gamma_{M_3}$, $\Gamma_{M_5}$, and $\Gamma_{M_6}$ are very simple and they are depicted in Fig.~\ref{fi:cost-minimumdrawings}(b). Fig.~\ref{fi:cost-minimumdrawings}(c) depicts $\Gamma_{M_4}$. We have that $G_{M_4}$ contains the crown graph and $\Gamma_{M_4}$ is cost-minimum, since it has one fip, that is $(16,19)$.  Fig.~\ref{fi:modules}(a) depicts $\Gamma'$, where $G/C_P$ is in Fig.~\ref{fi:modules}(b). Drawing $\Gamma'$ is $\Gamma_3$ of Fig.~\ref{fi:modules}(c), which is cost-minimum, as described in Example~\ref{ex:cost}.
	\end{example}
	\begin{figure}[h]
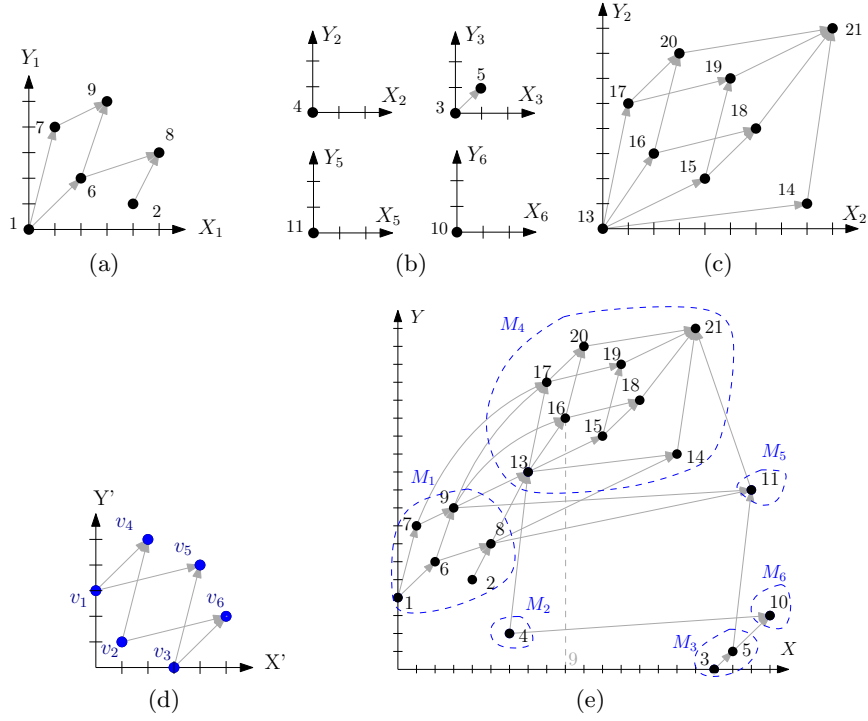

		\centering
		\subfigure[]{\includegraphics[width=0.32\columnwidth,page=4]{min-fip-example}}
		\hfil
		\subfigure[]{\includegraphics[width=0.32\columnwidth,page=9]{min-fip-example}}
		\hfil
		\subfigure[]{\includegraphics[width=0.32\columnwidth,page=5]{min-fip-example}}
		\hfil
		\subfigure[]{\includegraphics[width=0.32\columnwidth,page=3]{min-fip-example}}
		\hfil
		\subfigure[]{\includegraphics[width=0.45\columnwidth,page=7]{min-fip-example}}
		\hfil
		\caption{Refer to $G$ and $C_P$ in Fig.~\ref{fi:modules}. The cost-minimum weak dominance drawings of: (a)~$G_{M_1}$; (b)~$G_{M_2}$, $G_{M_3}$, $G_{M_5}$, $G_{M_6}$; (c)~$G_{M_4}$; (d)~$G/C_P$; (e)~$G$.}\label{fi:cost-minimumdrawings}
	\end{figure}
	
	We compute a weak dominance drawing $\Gamma$ of $G$ by expanding the vertex $v_i$ to the drawing $\Gamma_{M_i}$ in $\Gamma'$, for any $i\in [1,h]$. See Figures~\ref{fi:cost-minimumdrawings}(d) and~(e). More formally, let $D'$ a dimension of $\Gamma'$ and let $V_{D'}^j$ be a set of vertices of $G/C_P$ so that $D'(v)<D'(v_j)$ for any $v\in V_{D'}^j$. We perform the following coordinate assignment operation for every $v\in G$ and for every dimension $D$~of~$\Gamma$:
	
	\smallskip
	\noindent
	\texttt{Coordinates Assignment Operation ($v$, $D$):} Let $M_i$ be the module of $C_P$ containing $v$ ($i\in [1,h]$). Let $D'$ and $D_i$ be the dimension corresponding to $D$ in $\Gamma'$ and $\Gamma_i$. I.e., if $D$ is the $g$th dimension of $\Gamma$, $D'$ and $D_i$ are the $g$th dimension of $\Gamma'$ and $\Gamma_i$, respectively. Set $D(v)=D'(v_i)+\sum_{u\in V_{D'}^j}(c(u)-1)+D_i(v)$.

	\begin{example} Refer to Fig.~\ref{fi:cost-minimumdrawings}. Fig.~\ref{fi:cost-minimumdrawings}(e) depicts $\Gamma$ after the Coordinates Assignment Operation is performed for every $v\in G$ and every dimension of~$\Gamma$. The graphs $G$ and $G/C_P$ are depicted in Fig.~\ref{fi:modules}(a) and~(b). Refer to Vertex $16$ and dimension $X$. We have $16\in M_4$ and $V_X^4=\{v_1,v_2\}$. Hence, $X(16)=X'(v_4)+(|c(v_1)|-1)+(|c(v_2)|-1)+X_4(16)=2+(6-1)+(1-1)+2=9$. 
	\end{example}

	Recall that, for $i\in [1,h]$, the vertices in $\Gamma_{M_i}$ are contained in the same module $M_i$ of $G$. Hence, since we obtain $\Gamma$ by expanding the vertices of $\Gamma'$ to the drawings $\Gamma_{M_1},...,\Gamma_{M_h}$ and since $\Gamma',\Gamma_{M_1},...,\Gamma_{M_h}$ are weak dominance drawing, we have that $\Gamma$ is a weak dominance drawing and that $C_P$ is compact in $\Gamma$.  We now show that $\Gamma$ is a cost-minimum $d$-dimensional dominance drawing of $G$ having the minimum number of fips among all the $d$-dimensional dominance drawings of $G$ where $C_P$ is compact. By Lemma~\ref{le:compaction}, it implies that $\Gamma$ is cost-minimum.

	\smallskip \noindent \emph{The inner-fips:} Notice that the drawing $\Gamma$ restricted to the vertices of any module $M_i\in G/C_P$ is $\Gamma_{M_i}$. Since $\Gamma_{M_i}$ is cost-minimum, we have that $\Gamma$ has the minimum number of inner-fips. 
	
	\smallskip \noindent \emph{The outer-fips:} Notice that $\Gamma'$ has a cost $opt(G/C_P)$. Since $G/C_P$ is compact in $\Gamma$ and since we obtained $\Gamma$ by expanding the vertices of $\Gamma'$ to the drawings $\Gamma_{M_1},...,\Gamma_{M_h}$, we can prove by argument similar to the ones of Lemma~\ref{le:cost_CP} that $\Gamma$ has $opt(G/C_P)$ outer-fips, that is the cost of $\Gamma'$. By Lemma~\ref{le:cost_CP} we have that  $\Gamma$ is the $d$-dimensional dominance drawing of $G$ having the minimum number of outer-fips and where $C_P$ is compact.  
	
	\smallskip
	For any vertex of $T$ we perform the $O(dk^2(k!)^d)$ time operation that we described in the Base Step and in the Recursive Step. Hence, we have that the algorithm requires  $O(ndk^2(k!)^d)$ time. Computing $T$ requires $O(nm)$. 
\end{proof}

Observe that a constant $d$ is the dominance dimension of a DAG $G=(V,E)$ if $G$ admits a dominance drawing with $d$ dimensions, but not with $d-1$. By Theorem~\ref{th:fpt} we can check in  $O(nm+ndk^2(k!)^d)$ time if $G$ admits a $d$-dimensional dominance drawing (i.e. a weak dominance drawing with $0$ fips). Hence, we have the following corollary of~Theorem~\ref{th:fpt}.

\begin{corollary}
	For any constant $d$, it is possible to test if the dominance dimension of a DAG $G$ is $d$ in $O(nm+ndk^2(k!)^d)$ time, where $k$ is the maximum degree of the vertices of the modular decomposition tree of $G$.
\end{corollary}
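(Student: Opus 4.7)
The plan is to perform a bottom-up dynamic program on the modular decomposition tree $T$. Since Lemma~\ref{le:compaction} guarantees the existence of a cost-minimum $d$-dimensional weak dominance drawing of $G$ in which the congruence partition induced by the children of any node of $T$ is compact, we may restrict the search to compact drawings at every level of the recursion without loss of optimality. The algorithm stores, for every module $M$ corresponding to an internal vertex of $T$, a cost-minimum weak dominance drawing $\Gamma_M$ of the induced subgraph $G_M$.

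For the base case I would consider a module $M$ whose children in $T$ are all leaves. Then $G_M$ has at most $k$ vertices, so a cost-minimum drawing can be computed by brute force: enumerate all $(k!)^d$ assignments of the vertices to topological orders, one per dimension, evaluate the cost of each in $O(dk^2)$ time by counting fips, and keep the minimum. For the recursive step I would consider an internal module $M$ whose children induce a congruence partition $C_P = \{M_1, \dots, M_h\}$ with $h \leq k$. By induction, a cost-minimum drawing $\Gamma_{M_i}$ is already available for each $G_{M_i}$. Assigning $c(v_i) = |M_i|$ to the super-vertex representing $M_i$ in the quotient graph $G_M/C_P$, I compute by brute force a cost-minimum weak dominance drawing $\Gamma'$ of $G_M/C_P$, again in $O(dk^2(k!)^d)$ time since the quotient has at most $k$ vertices. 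The desired drawing $\Gamma_M$ is then produced by an \emph{expansion} operation: in each dimension, each super-vertex $v_i$ of $\Gamma'$ is replaced by the block of consecutive coordinates dictated by $\Gamma_{M_i}$, and coordinates of the other super-vertices are shifted accordingly so that each $M_i$ occupies a consecutive range in every dimension.

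The correctness argument splits the cost of $\Gamma_M$ into inner-fips and outer-fips. Since $\Gamma_M$ restricted to $M_i$ coincides with $\Gamma_{M_i}$, the number of inner-fips is minimal by the inductive hypothesis and is independent of how modules are interleaved, provided each remains compact. For the outer-fips, because $C_P$ is compact in $\Gamma_M$ and $\Gamma'$ was taken cost-minimum, Lemma~\ref{le:cost_CP} together with the bijection between outer-fips of $\Gamma_M$ and fips of $\Gamma'$ (weighted by $|M_i| \cdot |M_j|$) shows that the number of outer-fips equals $opt(G_M/C_P)$, which is the minimum achievable among all compact drawings. Combined with Lemma~\ref{le:compaction}, this proves that $\Gamma_M$ is cost-minimum overall. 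For the running time, computing $T$ takes $O(nm)$; each of the at most $n$ internal nodes of $T$ contributes $O(dk^2(k!)^d)$ work for the brute force plus linear work for the expansion, giving the claimed $O(nm + ndk^2(k!)^d)$ bound.

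The main obstacle I expect is the expansion step, where the coordinate assignment must simultaneously ensure that (a) each $M_i$ remains compact in every dimension, (b) the resulting drawing is still a valid weak dominance drawing, i.e., every path in $G_M$ corresponds to a coordinate-wise strict inequality, and (c) the outer-fips of $\Gamma_M$ are in cost-preserving bijection with the fips of $\Gamma'$. Property (b) follows from the fact that all of $\Gamma', \Gamma_{M_1}, \dots, \Gamma_{M_h}$ are weak dominance drawings and that within each $M_i$ the only incomparable pairs to check against external vertices are handled uniformly by the module property; property (c) is precisely the argument already used in the proof of Lemma~\ref{le:cost_CP}, since a pair $(v_i, v_j)$ forming a fip in $\Gamma'$ produces exactly $|M_i| \cdot |M_j|$ pairwise outer-fips in $\Gamma_M$, and conversely.
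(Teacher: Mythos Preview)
Your write-up is a faithful reproduction of the paper's proof of Theorem~\ref{th:fpt}: the bottom-up dynamic program on $T$, the brute-force enumeration on each quotient, the expansion operation, and the inner/outer-fip accounting via Lemma~\ref{le:compaction} and Lemma~\ref{le:cost_CP} all match the paper. As a proof of Theorem~\ref{th:fpt}, it is correct and takes the same route.

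However, the statement you were asked to prove is the \emph{corollary}, and your proposal never addresses it. You compute a cost-minimum $d$-dimensional weak dominance drawing but nowhere do you say what this has to do with the \emph{dominance dimension} of $G$. The missing link, which is the paper's entire argument for the corollary, is the observation that $G$ admits a $d$-dimensional dominance drawing if and only if the minimum number of fips over all $d$-dimensional weak dominance drawings is zero; so after running the algorithm of Theorem~\ref{th:fpt} you simply check whether the optimum equals~$0$. To decide whether the dominance dimension is \emph{exactly} $d$ (rather than at most $d$) you must also invoke the algorithm in dimension $d-1$ and verify that the minimum there is positive. Both calls fit within the stated $O(nm+ndk^2(k!)^d)$ bound since $d$ is constant. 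Add that one sentence and your proof is complete.
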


\section{Proof of Lemma \ref{le:compaction}} 
\label{se:compaction_proof}

In this section we prove the following lemma, introduced in Section~\ref{se:th1}.
\compact*

Before proving Lemma~\ref{le:compaction} we introduce some notation and prove some intermediate results. Let $M$ be a module of $G$, $\Gamma$ be a weak dominance drawing of $G$, and $D$ be a dimension of $\Gamma$. The \emph{separator} of $M$ in $D$ is a set $S\subseteq V$$\setminus$$M$ so that, for any $v\in S$, there exist two vertices $u,w\in M$ so that $D(u)<D(v)<D(w)$. 

\begin{example} Consider Fig.~\ref{fi:compaction}(a) and the module $M=\{5,7,10,12,14\}$. The separator of $M$ in dimension $X$ and $Y$ is $\{2,3,9,13\}$ and $\{6,8,9,11,16\}$, respectively.
\end{example}
Note that if $S=\emptyset$, $M$ is compact in~$D$. We have the following claim.

\begin{cl}
	\label{le:SMincomparable}
	Let $\Gamma$ be a weak dominance drawing of $G$. Let $M$ be a module if $G$, $D$ be a dimension of $\Gamma$, and $S$ be the separator of $M$ in $D$. Any vertex $u\in S$ is incomparable with the vertices of $M$.
\end{cl}
\begin{proof}
	Let $v\in S$ and $u,w\in M$ such that $D(u)<D(v)<D(w)$. There is no path from $v$ to $u$, since $D(v)>D(u)$. Similarly, there is no path from $w$ to $v$, since $D(w)>D(v)$. Hence, $v$ is incomparable to the vertices of $M$. 
\end{proof}

Let $\Gamma$ be a weak dominance drawing of $G$. In all the following, given a module $M$, we consider the congruence partition $C_P=\{M,V$$\setminus$$M\}$ of $G$. In this setting, a fip $(u,v)$ of $\Gamma$ is an inner-fip of $\Gamma$ if $u,v\in M$ or $u,v\in V$$\setminus$$M$. Otherwise, it is an outer-fip of $\Gamma$. Switching $u$ and $v$ in dimension $D$ is equivalent to setting $D(u)=\alpha$, $D(u)=D(v)$, and $D(v)=\alpha$.  For any $v\in M$, let $out_v$ be the number of outer-fips involving $v$. Let $p$ be a vertex of $M$ with the minimum number of outer fips $out_p$. We now describe an operation that we denote by~\emph{compaction}~of~$M$.

\medskip\noindent \texttt{Compaction ($\Gamma$, $M$):} For every dimension $D$ of $\Gamma$, having separator $S$, perform the following computation: While there are two vertices $u\in S$ and $v\in M$ such that $D(u)=D(v)-1<D(p)$ or $D(p)<D(v)=D(u)-1$, switch~$u$~and~$v$ in $D$. 

\begin{figure}[h]
	\centering
	\subfigure[]{\includegraphics[width=0.3\columnwidth,page=2]{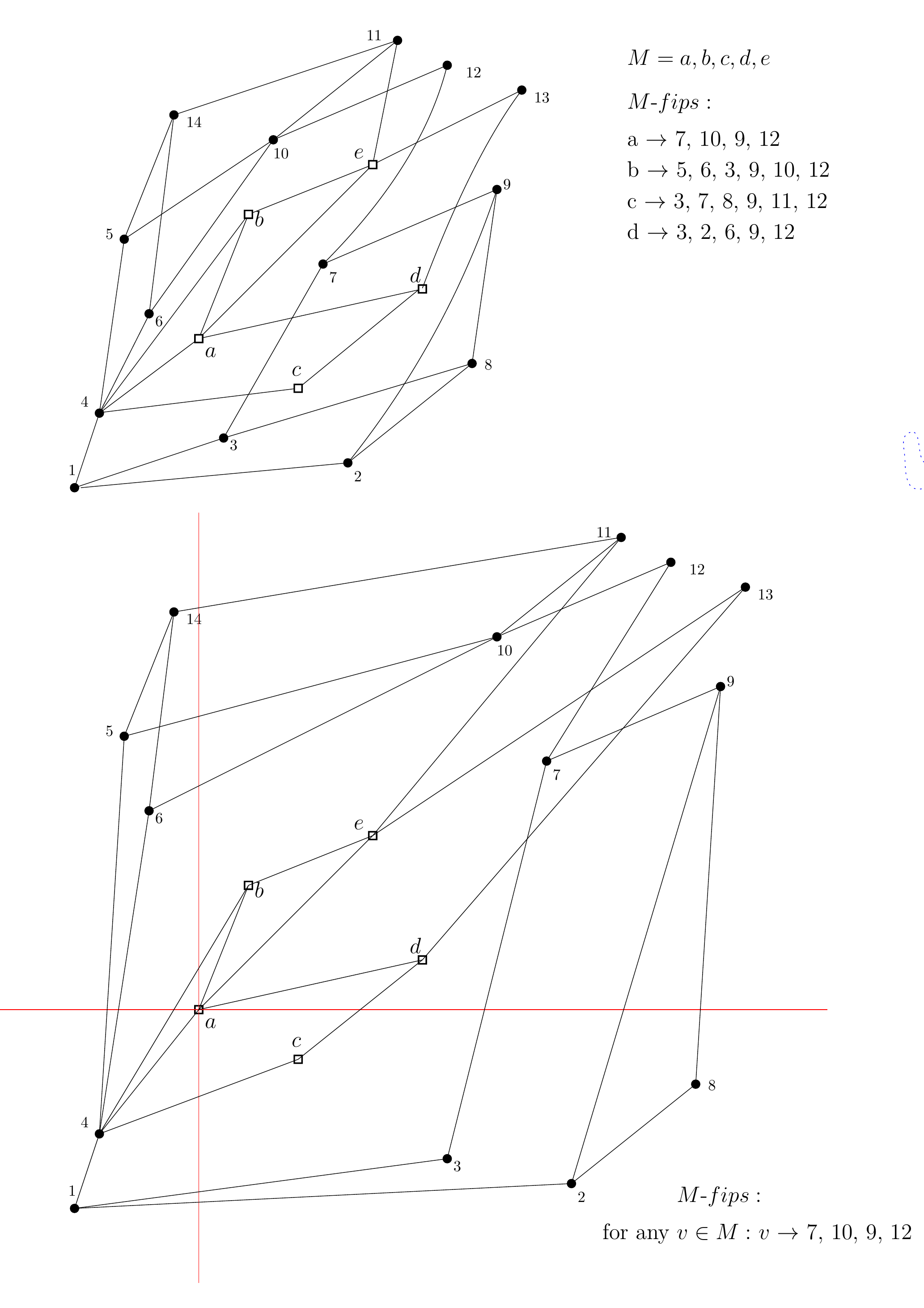}}
	\hfil
	\subfigure[]{\includegraphics[width=0.3\columnwidth,page=3]{image}}
	\hfil
	\subfigure[]{\includegraphics[width=0.3\columnwidth,page=4]{image}}
	\hfil
	\subfigure[]{\includegraphics[width=0.3\columnwidth,page=5]{image}}
	\hfil
	\subfigure[]{\includegraphics[width=0.3\columnwidth,page=6]{image}}
	\hfil
	\subfigure[]{\includegraphics[width=0.3\columnwidth,page=7]{image}}
	\hfil
	\caption{Illustration of the operation of compaction.}\label{fi:compaction}
\end{figure}

\begin{example}
\label{ex:compaction_outerfip} 
Refer to Fig.~\ref{fi:compaction}(a-c) and the module $M=\{5,7,10,12,14\}$. Consider the 2-dimensional weak dominance drawing in Fig.~\ref{fi:compaction}(a). We have: 

\noindent $\bullet$ $out_5=6$ (outer-fip $(3,5)$ and $(5,w)$ $\forall$ $w\in \{9,11,16,17,18\}$).

\noindent $\bullet$ $out_7=5$ (outer-fip $(7,w)$ $\forall$ $w\in \{9,13,16,17,18\}$).

\noindent $\bullet$ $out_{10}=5$ (outer-fip $(v,10)$ $\forall$ $v\in \{2,3,6\}$) and $(5,w)$ $\forall$ $w\in \{16,18\}$).

\noindent $\bullet$ $out_{12}=7$ (outer-fip $(v,12)$ $\forall$ $v\in \{3,6,8\}$) and $(12,w)$ $\forall$ $w\in \{13,16,17,18\}$).

\noindent $\bullet$ $out_{14}=6$ (outer-fip $(v,14)$ $\forall$ $v\in \{3,6,8,9\}$) and $(12,w)$ $\forall$ $w\in \{17,18\}$).

\smallskip\noindent The vertex $p$ of $M$ having minimum $out_p$ can be either $7$ or $10$. We chose $p=7$. Fig.~\ref{fi:compaction}(b) shows the drawing after that the operation of compaction of $M$ in $\Gamma$ performed its step on dimension $Y$ (and still not on dimension $X$). Fig.~\ref{fi:compaction}(c) shows the graph resulting after the operation of compaction of $M$~in~$\Gamma$. 
\end{example}
By construction, after the compaction of $M$ in $\Gamma$, $M$ is compact~in~$\Gamma$. 

\begin{cl}
	\label{le:compaction-technical}
	Let $M$ be a module of $G$ and let $M'$ be a module different from $M$ that is compact in $\Gamma$. By performing the compaction of $M$, we have that: (1)~$\Gamma$ is (still) a weak dominance drawing; (2)~the number of fips of $\Gamma$ is not increased; (3)~module $M'$ is (still) compact in $\Gamma$.
\end{cl}
\begin{proof}
	 Notice that, after the compaction of $M$,  the relative positions of two vertices $u$ and $v$ in $\Gamma$ changed if and only if $u\in M$ and $v\in S$ or vice versa. This fact has two implications. First, by Claim~\ref{le:SMincomparable}, $\Gamma$ remains a weak dominance drawing and we have that Property~(1) is verified. Second, we have that:
	 
	 \smallskip\noindent (i) The number of inner-fips of $\Gamma$ does not change (recall $S\subseteq V$$\setminus$$M$).
	 
	 \smallskip\noindent
	 By construction, the relative position of any $u\in G$ and $p$ do not change. Also, recall that $M$ is compact in $\Gamma$. Hence:
	 
	 \smallskip\noindent(ii) The number of fips of $\Gamma$ involving $p$ does not change.
	 
	\smallskip\noindent(iii) The relative position between any $v\in V$$\setminus$$M$ and $p$ is the same as the relative position between $v$ and any  $u\in M$.
	
	\smallskip
	Consideration~(iii) implies that every vertex $v\in M$ is involved in $out_p$ outer-fips of $M$ and, by~(ii) and since $out_p\le out_v$ before the compaction, we have that the number of outer-fips involving $v$ is not augmented. Hence, by~(i), the number of fips of $\Gamma$ is not increased. It follows that Property~(2) of the claim is verified. It remains to show Property~(3). Let $D$ be any dimension of $\Gamma$ and $S$ be the separator of $M$ in $D$. If $M'\cap S= \emptyset$ the compaction of $M$ does not modify the position of the vertices of $M'$ and $M'$ remains compact. Suppose  $M'\cap S\not =\emptyset$. Since $M'$ is compact we have $M'\subseteq S$. Hence, if we switch $u\in M'$ and $v\in M$, then we switch $v$ with all the vertices of $M'$. Hence, $M'$ remains compact. 
\end{proof}
See Example~\ref{ex:propertyclaimcomptech} in the Appendix for an illustration of the three properties of Claim~\ref{le:compaction-technical}. We now prove Lemma \ref{le:compaction}. Let $\Gamma$ be a cost-minimum weak dominance drawing of $G$. We show that, given $\Gamma$, it is possible to compute a cost-minimum weak dominance drawing $\overline{\Gamma}$ of $G$ having the same number of fips of $\Gamma$ and where $C_P$ is compact. We initialize $\overline{\Gamma}=\Gamma$. For every $M\in C_P$ we perform the compaction of $M$ in $\overline{\Gamma}$. After every compaction we have that: $M$ is compact in $\overline{\Gamma}$ by construction; $\overline{\Gamma}$ is still a weak dominance drawing by Property~(1) of Claim~\ref{le:compaction-technical}; $\overline{\Gamma}$ has the minimum number of fips by  Property~(2) of Claim~\ref{le:compaction-technical}; every module $M'$ the that was compact before the compaction of $M$ remains compact by  Property~(3) of Claim~\ref{le:compaction-technical}. Hence, $\overline{\Gamma}$ is a weak dominance drawing of $G$ with the minimum number of fips and where $C_P$ is compact. This proves Lemma~\ref{le:compaction}.

\section{Concluding Remarks}
In this paper we present a fixed parameter algorithm solving the fips-minimization problem, which is NP-hard. We show that if the maximum degree of the vertices of the modular decomposition tree of a DAG $G$ is a constant, then the problem is polynomial-time solvable for any constant number of dimensions. 
We use a brute force algorithm to obtain cost-minimum weak dominance drawing of the quotient graphs associated to every module of the modular decomposition~tree.

Observe that the additive term ``$nm$'' in the complexity of  Theorem~\ref{th:fpt} is required to compute $G^*$ and consequently the path-based modular decomposition tree $T$ of $G$. Our results hold if we use the concept of edge-based module instead of the concept of path-based module. In this case, since computing the edge-based modular decomposition requires $O(m)$ time, the time complexity of Theorem~\ref{th:fpt} is $O(m+ndk^2(k!)^d)$. However, as we already observed in Section~\ref{se:preliminaries}, $k$ for edge-based modules is typically greater than $k$ for path-based modules.

Notice that in general the concept of module can be defined in many ways. For example, in~\cite{DBLP:journals/dase/AnirbanWI19} they consider only path-based modules $M$ where $G_M$ is a path or a set of $|M|$ incomparable vertices. Given a definition of module, the time complexity of  Theorem~\ref{th:fpt} is $O(y+ndk^2(k!)^d)$, where $m \le y\le nm$ is the time complexity to compute the modular decomposition tree. However, the more restrictive the definition of module, the higher the value of $k$.

\paragraph{Open problems:} It would be interesting to find polynomial time heuristics to minimize this cost and to run experiments using these heuristics instead of the brute force approach in order to find out if they compute a smaller number of fips than is known in the literature~\cite{DBLP:journals/www/LiHZ17,DBLP:conf/edbt/VelosoCJZ14}. Another interesting problem is to find algorithms computing drawings not with the minimum number of fips, but with a bounded number of fips. In this case, the problem of computing all the fips efficiently could become very interesting also in practice. Finally, we believe that also computing weak dominance drawings where the number of vertices involved in fips is minimized could be an important step forward in this line of research.

\clearpage
\bibliographystyle{plain}
\bibliography{Literature}

\clearpage
\section*{Appendix}

\begin{example} 
	\label{ex:propertyclaimcomptech}
	Illustration for the three properties of Claim~\ref{le:compaction-technical} ($d=2$).
	
	\noindent
	\texttt{- Property~(1).} The drawing in Fig.~\ref{fi:compaction}(c), obtained by performing the compaction on module $M=\{5,7,10,12,14\}$ in the weak dominance drawing in Fig.~\ref{fi:compaction}(a), is a weak dominance drawing. 
	
	\noindent
	\texttt{- Property~(2).} The outer-fips involving the any vertex $v\in M$ in the weak dominance drawing in Fig.~\ref{fi:compaction}(c) are  $(3,v)$ and $(v,w)$ $\forall$ $w\in \{9,11,16,17,18\}$. Before the compaction, Fig.~\ref{fi:compaction}(a), any vertex $v$ where involved in not less outer-fips (see Example~\ref{ex:compaction_outerfip}). Notice that vertex $p=7$ is involved in the same fips in both drawings. The inner-fips in Fig.~\ref{fi:compaction}(a) and~(c) are the same. Hence, the drawing in Fig.~\ref{fi:compaction}(c) has no more fips than the one in Fig.~\ref{fi:compaction}(a).
	
	\noindent
	\texttt{- Property~(3).}  Refer to Fig.~\ref{fi:compaction}(d-f). Denote now $M=\{6,8,13,15\}$ and $M'=\{5,7,10,12,14\}$. Consider the weak dominance drawing in Fig.~\ref{fi:compaction}(d). Fig.~\ref{fi:compaction}(e) shows the drawing after that the operation of compaction of $M$ in $\Gamma$ performed its step on dimension $Y$ (and still not on dimension $X$). Fig.~\ref{fi:compaction}(f) shows the graph resulting after the operation of compaction of $M$~in~$\Gamma$. Notice that Module $M'$, that is compact in Fig.~\ref{fi:compaction}(d), is still compact in Figures~\ref{fi:compaction}(e) and~(f).
\end{example}

\end{document}